\newcommand{\defref}[1]{Definition~\ref{#1}}
\newcommand{\asmref}[1]{Assumption~\ref{#1}}
\newcommand{\lemref}[1]{Lemma~\ref{#1}}
\newcommand{\thmref}[1]{Theorem~\ref{#1}}
\newcommand{\remref}[1]{Remark~\ref{#1}}
\newcommand{\secref}[1]{Section~\ref{#1}}
\newcommand{\figref}[1]{Fig.~\ref{#1}}
\newcommand{\N}{\mathbb{N}}
\newcommand{\Z}{\mathbb{Z}}
\newcommand{\R}{\mathbb{R}}
\newcommand{\X}{\mathcal{X}}
\newcommand{\K}{\mathcal{K}}
\newcommand{\M}{\mathcal{M}}
\newcommand{\C}{\mathcal{C}}
\newcommand{\KeyGen}{\mathsf{KeyGen}}
\newcommand{\Enc}{\mathsf{Enc}}
\newcommand{\Dec}{\mathsf{Dec}}
\newcommand{\Eval}{\mathsf{Eval}}
\newcommand{\KeyUpd}{\mathsf{KeyUpd}}
\newcommand{\CtUpd}{\mathsf{CtUpd}}
\newcommand{\EC}{\mathsf{EC}}
\newcommand{\Ecd}{\mathsf{Ecd}}
\newcommand{\Dcd}{\mathsf{Dcd}}
\newcommand{\Sum}{\mathsf{Sum}}
\newcommand{\pk}{\mathsf{pk}}
\newcommand{\sk}{\mathsf{sk}}
\newcommand{\ct}{\mathsf{ct}}
\newcommand{\tr}{\mathop{\mathrm{tr}}\limits}
\newcommand{\diag}{\mathop{\mathrm{diag}}\limits}
\newcommand{\argmin}{\mathop{\mathrm{arg~min}}\limits}
\newcommand{\EV}{\mathop{\mathbb{E}}\limits}
\newtheorem{definition}{Definition}
\newtheorem{assumption}{Assumption}
\newtheorem{lemma}{Lemma}
\newtheorem{theorem}{Theorem}
\newtheorem{remark}{Remark}
\def\BibTeX{{\rm B\kern-.05em{\sc i\kern-.025em b}\kern-.08em
    T\kern-.1667em\lower.7ex\hbox{E}\kern-.125emX}}
\begin{document}
\title{Optimal Controller and Security Parameter for Encrypted Control Systems Under Least Squares Identification}
\author{Kaoru Teranishi, \IEEEmembership{Graduate Student Member, IEEE} and Kiminao Kogiso \IEEEmembership{Member, IEEE}
\thanks{This work was supported by JSPS Grant-in-Aid for JSPS Fellows Grant Number JP21J22442 and JSPS KAKENHI Grant Number JP22H01509.}
\thanks{K. Teranishi and K. Kogiso are with the Department of Mechanical and Intelligent Systems Engineering, The University of Electro-Communications, 1-5-1 Chofugaoka, Chofu, Tokyo 1828585, Japan (e-mail: teranishi@uec.ac.jp, kogiso@uec.ac.jp).}
\thanks{K. Teranishi is also with Japan Society for the Promotion of Science, Chiyoda-ku, Tokyo 1020083, Japan.}
}

\thispagestyle{empty}
\hspace{-4.5mm}
\fbox{
\begin{minipage}{\textwidth-5mm}\scriptsize
© 20XX IEEE.
Personal use of this material is permitted.
Permission from IEEE must be obtained for all other uses, in any current or future media, including reprinting/republishing this material for advertising or promotional purposes, creating new collective works, for resale or redistribution to servers or lists, or reuse of any copyrighted component of this work in other works.
\end{minipage}
}
\newpage
\setcounter{page}{0}

\maketitle
\thispagestyle{empty}

\begin{abstract}
Encrypted control is a framework for the secure outsourcing of controller computation using homomorphic encryption that allows to perform arithmetic operations on encrypted data without decryption.
In a previous study, the security level of encrypted control systems was quantified based on the difficulty and computation time of system identification.
This study investigates an optimal design of encrypted control systems when facing an attack attempting to estimate a system parameter by the least squares method from the perspective of the security level.
This study proposes an optimal $\mathnormal{H_2}$ controller that maximizes the difficulty of estimation and an equation to determine the minimum security parameter that guarantee the security of an encrypted control system as a solution to the design problem.
The proposed controller and security parameter are beneficial for reducing the computation costs of an encrypted control system, while achieving the desired security level.
Furthermore, the proposed design method enables the systematic design of encrypted control systems.
\end{abstract}

\begin{IEEEkeywords}
Networked control systems, optimal control, cybersecurity, encrypted control, homomorphic encryption
\end{IEEEkeywords}

\section{Introduction}
\label{Introduction}

\IEEEPARstart{E}{ncrypted} control using homomorphic encryption is a major approach for security enhancement of networked control systems, as a network eavesdropper and a controller server cannot learn about the control system~\cite{darup2021,kogiso2015}.
Unlike traditional public-key encryption, homomorphic encryption enables arithmetic operations on encrypted data, and therefore the server does not require a secret key for decryption.
Hence, encrypted control has been applied to various controls, as in~\cite{farokhi2017,kim2023,darup2018a,alexandru2020b,suh2021}, to realize the secure outsourcing of controller computation to an untrusted server and implemented to some practical systems~\cite{cheon2018a,teranishi2020,shono2022}.
Moreover, attacks for encrypted control systems and their countermeasures were studied in~\cite{fauser2022,naseri2022,alisic2023}.

Although most existing encrypted controls rely on the security of used homomorphic encryption, the controls need other security definitions because, in control systems, the information to be protected is system parameters rather than a single sensor or control signal data at a certain point of time.
To solve this problem, recent studies have explored the security of encrypted control systems.
In~\cite{teranishi2022a}, the authors examined the provable security of the systems and analyzed the connection between the security and a traditional cryptographic security definition.
The study~\cite{teranishi2023a} focused on quantifying the security level of encrypted control systems using the sample complexity and computation time of system identification, disclosing the parameters of a target system.
The study also included a design for a controller that maximizes the sample complexity of the Bayes estimation for a system matrix of a closed-loop system with an encrypted controller.
Then, the study determined the minimum key length required to ensure that the computation time exceeds the period in which the target system is replaced.
Additionally, for a given security parameter, the study~\cite{kim2020a} provided a guideline for choosing cryptosystem parameters in an encrypted control system.

Here we consider the design of encrypted control systems under the least squares identification attack for a system matrix of a closed-loop system.
Preventing such attacks is essential to realizing secure control systems because once the attack is successful, an attacker can implement undetectable attacks based on the system model~\cite{Chong2019}.
A major challenge in designing the systems against the attack is the computation costs associated with encryption algorithms~\cite{darup2021}.
The use of homomorphic encryption can significantly increase the computational burden on the system, leading to longer computation times, potentially affecting the real-time performance of the system.
Furthermore, as the security parameter increases, the computation costs also increase, which leads to a trade-off between security level and performance.

This study proposes a systematic method for solving the security and performance trade-off by designing an optimal controller and security parameter for encrypted control systems based on the security definition in~\cite{teranishi2023a}.
To this end, this study derives a novel sample complexity of the systems.
With the novel sample complexity, we reveal that the security level of the system is connected to the controllability Gramian of the target system.
The optimal controller is designed as an optimal $H_2$ controller that minimizes the trace of the controllability Gramian to maximize the security level for a given security parameter.
Then, the optimal security parameter is determined as the minimum security parameter to achieve the desired security level.

The proposed method contributes to the generalization of the design method in~\cite{teranishi2023a}.
The previous method chose the minimum key length for a specific encryption scheme.
In contrast, the proposed method determines the minimum security parameter rather than the key length.
A security parameter is a common quantity for encryption schemes, and thus the proposed method can be applied to encrypted control systems with any homomorphic encryption.
Moreover, the attack based on the least squares method considered in this study is easier for attackers to perform compared to the Bayesian estimation in~\cite{teranishi2023a} because the method does not require prior knowledge of a target system.
Therefore, a broader class of encrypted control systems can be protected by preventing least squares identification attacks.

The rest of this paper is organized as follows.
\secref{sec:preliminaries} defines the syntax of homomorphic encryption and encrypted control.
\secref{sec:security} formulates an attack scenario and introduces the security definition of encrypted control systems.
\secref{sec:optimal_design} proposes an optimal controller and security parameter.
\secref{sec:example} shows a numerical example.
\secref{sec:conclusions} describes conclusions and future work.

\section{Preliminaries}
\label{sec:preliminaries}

\subsection{Notation}

The sets of natural numbers, integers, and real numbers are denoted by $\N$, $\Z$, and $\R$, respectively.
A key space, a plaintext space, and a ciphertext space are denoted by $\K$, $\M$, and $\C$, respectively.
Define the set $\Z^+ \coloneqq \{z \in \Z \mid 0 \le z\}$ and the bounded set $\X \subset \R$.
The sets of $n$-dimensional vectors and $m$-by-$n$ matrices of which elements and entries belonging to the set $\mathcal{A}$ are denoted by $\mathcal{A}^n$ and $\mathcal{A}^{m \times n}$, respectively.
The $i$th element of vector $v \in \mathcal{A}^n$ and the $(i,j)$ entry of matrix $M \in \mathcal{A}^{m \times n}$ are denoted by $v_i$ and $M_{ij}$, respectively.
The Frobenius norm of $M \in \mathcal{A}^{m \times n}$ is denoted by $\|M\|_F \coloneqq \sqrt{\tr(M^\top M)}$.

\subsection{Homomorphic encryption}

This section describes the syntax and security level of encryption.
In the following, a security parameter is denoted by $\lambda \in \N$.
First, homomorphic encryption is defined as follows~\cite{Acar18}.

\begin{definition}
\label{def:he}
    Homomorphic encryption is $(\KeyGen, \allowbreak \Enc, \allowbreak \Dec, \allowbreak \Eval)$ such that:
    \begin{itemize}
        \item $(\pk, \sk) \gets \KeyGen(1^\lambda)$: A key generation algorithm takes $1^\lambda$ and outputs a key pair $(\pk, \sk) \in \K$, where $1^\lambda$ is the unary representation of a security parameter, $\pk$ is a public key, and $\sk$ is a secret key.
        \item $\ct \gets \Enc(\pk, m)$: An encryption algorithm takes a public key $\pk$ and a plaintext $m \in \M$ and outputs a ciphertext $\ct \in \C$.
        \item $m \gets \Dec(\sk, \ct)$: A decryption algorithm takes a secret key $\sk$ and a ciphertext $\ct \in \C$ and outputs a plaintext $m \in \M$.
        \item $\ct \gets \Eval(\pk, \ct_1, \ct_2)$: A homomorphic evaluation algorithm takes a public key $\pk$ and ciphertexts $\ct_1, \ct_2 \in \C$ and outputs a ciphertext $\ct \in \C$.
        \item Correctness: $\Dec(\sk, \Enc(\pk, m)) = m$ holds for any $(\pk, \sk) \gets \KeyGen(1^\lambda)$ and for any $m \in \M$.
        \item Homomorphism: $\Dec(\sk, \Eval(\pk, \ct_1, \ct_2)) = m_1 \bullet m_2$ holds for any $(\pk, \sk) \gets \KeyGen(1^\lambda)$ and for any $m_1, m_2 \in \M$, where $\ct_1 \gets \Enc(\pk, m_1)$, $\ct_2 \gets \Enc(\pk, m_2)$, and $\bullet$ is a binary operation on $\M$.
    \end{itemize}
\end{definition}

Homomorphic encryption is called as additive, multiplicative, or (leveled) fully homomorphic encryption if the binary operation is addition ($\bullet = +$), multiplication ($\bullet = \times$), or both addition and multiplication, respectively.

Next, we define updatable homomorphic encryption.

\begin{definition}
\label{def:upd_he}
    Let $\Pi=(\KeyGen, \allowbreak \Enc, \allowbreak \Dec, \allowbreak \Eval)$ be homomorphic encryption.
    Updatable homomorphic encryption is $(\Pi, \allowbreak \KeyUpd, \allowbreak \CtUpd)$ such that:
    \begin{itemize}
        \item $(\pk_{t+1}, \sk_{t+1}, \sigma_t) \gets \KeyUpd(\pk_t, \sk_t)$: A key update algorithm takes a key pair $(\pk_t, \sk_t) \in \K$ at time $t \in \Z^+$ and outputs an updated key pair $(\pk_{t+1}, \sk_{t+1}) \in \K$ and an update token $\sigma_t$.
        \item $\ct_{t+1} \gets \CtUpd(\ct_t, \sigma_t)$: A ciphertext update algorithm takes a ciphertext $\ct_t \in \C$ and an update token $\sigma_t$ at time $t \in \Z^+$ and outputs an updated ciphertext $\ct_{t+1} \in \C$.
        \item Correctness: $\Dec(\sk_t, \ct_t) = \Dec(\sk_t, \Enc(\pk_t, m)) = m$ holds for any $(\pk_0, \sk_0) \gets \KeyGen(1^\lambda)$, for any $m \in \M$, and for all $t \in \Z^+$, where $\ct_0 \gets \Enc(\pk_0, m)$, $(\pk_{t+1}, \sk_{t+1}, \sigma_t) \gets \KeyUpd(\pk_t, \sk_t)$, and $\ct_{t+1} \gets \CtUpd(\ct_t, \sigma_t)$.
        \item Homomorphism: $\Dec(\sk_t, \Eval(\pk_t, \ct_{1,t}, \ct_{2,t})) = \Dec(\sk_t, \Eval(\pk_t, \Enc(\pk_t, m_1), \Enc(\pk_t, m_2))) = m_1 \bullet m_2$ holds for any $(\pk_0, \sk_0) \gets \KeyGen(1^\lambda)$, for any $m_i \in \M$, and for all $t \in \Z^+$, where $\ct_{i,0} \gets \Enc(\pk_0, m_i)$, $(\pk_{t+1}, \sk_{t+1}, \sigma_t) \gets \KeyUpd(\pk_t, \sk_t)$, $\ct_{i,t+1} \gets \CtUpd(\ct_{i,t}, \sigma_t)$, and $i = 1, 2$.
    \end{itemize}
\end{definition}

Updatable homomorphic encryption is a public-key variant of updatable encryption~\cite{boneh2013,lehmann2018} with a homomorphic evaluation algorithm.
The following property is assumed for the updatable homomorphic encryption used in this study.

\begin{assumption}
\label{asm:key_upd}
    A key pair $(\pk_k, \sk_k)$ provides no information about a key pair $(\pk_j, \sk_j)$ for any $k \in \Z^+$ and for any $j \in \Z^+ \setminus \{k\}$, where $(\pk_0, \sk_0) \gets \KeyGen(1^\lambda)$, and $(\pk_{t+1}, \sk_{t+1}, \sigma_t) \gets \KeyUpd(\pk_t, \sk_t)$.
\end{assumption}

Although one may think that the assumption is significantly stronger than a single-key case, updatable homomorphic encryption scheme satisfying it can be realized based on a standard cryptographic assumption~\cite[Propositions~2 and 3]{teranishi2023a}.

Finally, we define the security level of encryption schemes, which is quantified using the number of bits~\cite{Katz21}.

\begin{definition}
\label{def:bit_sec}
    An encryption scheme satisfies $\lambda$ bit security if at least $2^\lambda$ operations are required to break the scheme.
\end{definition}

\begin{remark}
    By \defref{def:bit_sec}, a security parameter $\lambda$ represents the security level of an encryption scheme.
    The optimal key length $k^\ast$ of an encryption scheme satisfying $\lambda$ bit security can be computed as
    \begin{equation}
        k^\ast = \argmin{}_{k\in\N}\, \Omega(k) \quad \text{s.t.} \quad \Omega(k) \ge 2^{\lambda},
        \label{eq:opt_key}
    \end{equation}
    where $\Omega(k)$ is the time complexity of the fastest known algorithm for breaking the encryption scheme.
\end{remark}

\subsection{Encrypted control}

Using (updatable) homomorphic encryption, encrypted control is defined as follows.

\begin{definition}
\label{def:ec}
    Given (updatable) homomorphic encryption and a controller $f:(\Phi, \xi) \mapsto \psi$, where $\Phi \in \X^{\alpha \times \beta}$ is a controller parameter, $\xi \in \X^\beta$ is a controller input, and $\psi \in \X^\alpha$ is a controller output.
    Let there exist $\Ecd$ and $\Dcd$ such that:
    \begin{itemize}
        \item $m \gets \Ecd(x; \Delta)$: An encoder algorithm takes $x \in \X$ and a scaling factor $\Delta \in \R$ and outputs a plaintext $m \in \M$.
        \item $x \gets \Dcd(m; \Delta)$: A decoder algorithm takes a plaintext $m \in \M$ and a scaling factor $\Delta \in \R$ and outputs $x \in \X$.
    \end{itemize}
    An encrypted controller of $f$ is $\EC$ such that:
    \begin{itemize}
        \item $\ct_\psi \gets \EC(\pk, \ct_\Phi, \ct_\xi)$: An encrypted control algorithm takes a public key $\pk$ and ciphertexts $\ct_\Phi \in \C^{\alpha \times \beta}, \ct_\xi \in \C^\beta$ and outputs a ciphertext $\ct_\psi \in \C^\alpha$.
        \item $\Dcd(\Dec(\sk, \EC(\pk, \ct_\Phi, \ct_\xi)); \Delta) \simeq f(\Phi, \xi)$ holds for some $\Delta \in \R$, for any $(\pk, \sk) \gets \KeyGen(1^\lambda)$, for any $\Phi \in \X^{\alpha \times \beta}$, and for any $\xi \in \X^\beta$, where $\ct_\Phi \gets \Enc(\pk, \Ecd(\Phi; \Delta))$, $\ct_\xi \gets \Enc(\pk, \Ecd(\xi; \Delta))$, and the algorithms perform each element of matrices and vectors.
    \end{itemize}
\end{definition}

Note that the controller parameter of an encrypted controller with additive homomorphic encryption is a plaintext rather than a ciphertext~\cite{farokhi2017,darup2018a}.
In this case, \defref{def:ec} can be modified by replacing $\ct_\Phi$ with $\Ecd(\Phi;\Delta)$.

\begin{remark}
\label{rem:quantization}
    Although encoder and decoder algorithms generally induce a quantization error $e(\Delta)$, i.e., $\Dcd(\Ecd(x; \Delta); \Delta) = x + e(\Delta)$, this study assumes that the error is negligible because of the appropriately chosen scaling factor $\Delta$~\cite{darup2021,teranishi2023a}.
\end{remark}

\section{Attack Scenario and Security Definition}
\label{sec:security}

This section formulates an attack scenario considered in this study and defines the security of encrypted control systems under the scenario.

\subsection{Attack scenario}

Given the control system
\begin{subequations}\label{eq:cs}
    \begin{align}
        x_{t+1} &= A_p x_t + B_p u_t + w_t, \label{eq:plant} \\
        u_t     &= F x_t,                   \label{eq:ctrl}
    \end{align}
\end{subequations}
where $t \in \Z^+$ is a time, $x_t \in \R^n$ is a state, $u_t \in \R^m$ is an input, and $w_t \in \R^n$ is a noise.
Suppose $x_0$ and $w_t$ are independent and identically distributed over the Gaussian distribution with mean $\bm{0}$ and variance $\sigma^2 I$.
The plant parameters $(A_p, B_p)$ are controllable, and $F$ is a feedback gain designed such that $A_p + B_p F$ is stable.
If $F \in \X^{m \times n}$, $x_t \in \X^n$, and $u_t \in \X^m$ for all $t \in \Z^+$, the encrypted control system of \eqref{eq:cs} with updatable multiplicative homomorphic encryption is given as
\begin{equation}
    \begin{aligned}
        x_{t+1}   &=     A_p x_t + B_p u_t + w_t,              \\
        u_t       &\gets \Dcd(\Dec(\sk_t, \ct_{u,t}); \Delta), \\
        \ct_{u,t} &\gets \EC(\pk_t, \ct_{F,t}, \ct_{x,t}),     \\
        \ct_{x,t} &\gets \Enc(\pk_t, \Ecd(x_t; \Delta)), 
    \end{aligned}
    \label{eq:ecs}
\end{equation}
where $(\pk_0, \sk_0) \gets \KeyGen(1^\lambda)$, $\ct_{F,0} \gets \Enc(\pk_0, \allowbreak \Ecd(F; \allowbreak \Delta))$, $(\pk_{t+1}, \sk_{t+1}, \sigma_t) \gets \KeyUpd(\pk_t, \sk_t)$, $\ct_{F,t+1} \gets \CtUpd(\ct_{F,t}, \sigma_t)$, $\EC$ is the encrypted controller of \eqref{eq:ctrl} that outputs a ciphertext matrix $\ct_{u,t} \in \C^{m \times n}$ of which $(i, j)$ entry is an output of $\Eval(\pk_t, \ct_{F_{ij},t}, \ct_{x_j,t})$, and the decryption algorithm is redefined as $\Dec \coloneqq \Sum \circ \Dec$ using $\Sum: \M^{m \times n} \to \M^m : M \mapsto [\, \sum_{i=1}^n M_{1i} \ \allowbreak \cdots \ \allowbreak \sum_{i=1}^n M_{mi} \,]^\top$~\cite{kogiso2015}.
Then, by \defref{def:ec}, the dynamics of the closed-loop system is given as
\begin{align}
       x_{t+1}
    &= A_p x_t + B_p \Dcd(\Dec(\sk_t, \ct_{u,t}); \Delta) + w_t, \nonumber \\
    &= A x_t + w_t, \label{eq:system}
\end{align}
where $A = A_p + B_p F$.
Note that $\Dcd(\Dec(\sk_t, \ct_{u,t}); \Delta) = F x_t$ holds thanks to the assumption in \remref{rem:quantization}.

Given the above settings and updatable homomorphic encryption satisfying \asmref{asm:key_upd}, this study considers the following attack scenario.

\begin{definition}
\label{def:attack}
    The attacker follows the procedure below.
    \begin{enumerate}
        \item
        The attacker eavesdrops the ciphertexts $\ct_{x,t}$ of \eqref{eq:ecs} within $t \in [t_s, t_f]$, where $0 < t_s < t_f < \infty$.
        \item
        The attacker deciphers the ciphertexts to obtain the original data $\{x_{t_s},\cdots,x_{t_f}\}$.
        \item
        The attacker estimates $A$ of \eqref{eq:system} by the least squares method,
        \begin{equation}
            \hat{A} = \argmin{}_{A \in \R^{n \times n}} \| X_f - A X_p \|_F^2 = X_f X_p^+,
            \label{eq:estimate}
        \end{equation}
        where $X_f = A X_p + W_p$, $X_f = [x_{t_s+1} \ \cdots \ x_{t_f}]$, $X_p = [x_{t_s} \ \cdots \ x_{t_f-1}]$, $W_p = [w_{t_s} \ \cdots \ w_{t_f-1}]$, and $X_p^+$ is the pseudo inverse matrix of $X_p$.
        We assume that $X_p$ is full row rank throughout this paper.
        Note that the assumption is met for a sufficiently large sample size $N = t_f - t_s + 1$ in practice.
    \end{enumerate}
\end{definition}

Additionally, we define the estimation error as
\begin{equation}
    \epsilon(N, F) \coloneqq (1 / n^2) \| A - \hat{A} \|_F^2.
    \label{eq:error}
\end{equation}
It should be noted that the estimation error implicitly depends on $F$ of \eqref{eq:ctrl} because $A$ of \eqref{eq:system} can be tuned by designing $F$.
This fact is relevant later in the controller design.

\begin{remark}
\label{rem:ecs}
    With typical multiplicative or (leveled) fully homomorphic encryption~\cite{ElGamal85,ckks2017}, the encrypted control system of \eqref{eq:cs} is given as \eqref{eq:ecs}, replacing $(\pk_t, \sk_t)$ and $\ct_{F,t}$ with $(\pk, \sk) = (\pk_0, \sk_0)$ and $\ct_F = \ct_{F,0}$, respectively.
    In addition, $\ct_F$ is modified to $\Ecd(F; \Delta)$ when using typical additive homomorphic encryption~\cite{Paillier99,regev2005}.
    The attack in \defref{def:attack} can be applied even to such encrypted control systems because the closed-loop dynamics of the systems are represented by \eqref{eq:system}.
\end{remark}

\begin{remark}
    The required computation time to perform the second step in \defref{def:attack} is determined by a security parameter, which will be formulated in \defref{def:sdt} to define the security of encrypted control systems.
\end{remark}

\subsection{Security of encrypted control system}

This study employs the security definition in~\cite{teranishi2023a} for encrypted control systems under the attack in \defref{def:attack}.
Roughly speaking, in the definition, an encrypted control system is said to be secure if an attacker cannot estimate the parameters of a target system with a certain accuracy within a given period.
The security is formulated based on two quantities, \textit{sample identifying complexity} and \textit{sample deciphering time}, defined below.

\begin{definition}
\label{def:sic}
    Let $N$ be a sample size.
    A sample identifying complexity of \eqref{eq:system} under the attack in \defref{def:attack} is a function $\gamma$ satisfying $\gamma(N, F) \le \EV[\epsilon(N, F)]$, where $F$ and $\epsilon$ are defined in \eqref{eq:ctrl} and \eqref{eq:error}, respectively.
\end{definition}

\begin{definition}
\label{def:sdt}
    A sample deciphering time is a computation time $\tau$ required for breaking $N$ ciphertexts of an \textit{updatable} homomorphic encryption scheme that satisfies $\lambda$ bit security and \asmref{asm:key_upd} by a computer of $\Upsilon$~floating point number operations per second (FLOPS), that is, 
    \begin{equation}
        \tau(N, \lambda) \coloneqq 2^\lambda N \Upsilon^{-1}.
        \label{eq:sdt}
    \end{equation}
    Note that the sample deciphering time for \textit{typical} homomorphic encryption is given as $\tau(1, \lambda)$ because the same key pair is used for encrypting all data.
\end{definition}

By these definitions, the sample identifying complexity and deciphering time quantify the difficulty of estimating $A$ in \eqref{eq:system} using data of sample size $N$ and the required computation time for recovering the data from ciphertexts, respectively.
Now we introduce two constants, \textit{acceptable estimation error} $\gamma_c$ and \textit{defense period} $\tau_c$, to represent an estimation error acceptable by a defender who is the designer of an encrypted control system and a period in which the system is desired to be protected.
Combining with $\gamma$, $\tau$, $\gamma_c$, and $\tau_c$, the security of encrypted control systems can be defined as follows.

\begin{definition}
\label{def:security}
    Let $\gamma_c$ be an acceptable estimation error and let $\tau_c$ be a defense period.
    The encrypted control system \eqref{eq:ecs} is secure if there does not exist a sample size $N$ such that $\gamma(N, F) < \gamma_c$ and $\tau(N, \lambda) \le \tau_c$, where $\gamma$ and $\tau$ are defined in \defref{def:sic} and \defref{def:sdt}, respectively.
    Otherwise, \eqref{eq:ecs} is unsecure.
\end{definition}

A larger $\gamma_c$ and longer $\tau_c$ imply a more secure encrypted control system as long as the system is secure.
In other words, a pair $(\gamma_c, \tau_c)$ represents the security level of a secure encrypted control system.
The constants are later used as design parameters for the optimal security parameter.

\begin{remark}
    Let $\delta > 0$.
    $|A_{ij} - \hat{A}_{ij}| \ge \delta$ holds for all $i,j = 1, \dots, n$ only if $\epsilon(N, F) \ge \delta^2$, where $A$, $\hat{A}$, and $\epsilon$ are defined in \eqref{eq:system}, \eqref{eq:estimate}, and \eqref{eq:error}, respectively.
    This fact suggests that $\gamma_c = \delta^2$ is one of the reasonable choices for an acceptable estimation error.
    Note that $\delta$ should be tailored for a given control system based on its potential risk.
    Additionally, a defense period $\tau_c$ can be chosen as a life span of \eqref{eq:cs}.
\end{remark}

\begin{remark}
    For a noiseless case, i.e., $w_t = 0$, an attacker can exactly identify $A$ of \eqref{eq:system} by decipering $n+1$ samples since $A$ is an $n$-by-$n$ matrix.
    The security definition in such a case can be modified so that \eqref{eq:ecs} is secure if $\tau(n + 1, \lambda) >\tau_c$.
\end{remark}

\section{Encrypted Control System Design}
\label{sec:optimal_design}

This section presents a design method for an optimal controller and security parameter.
To this end, we propose a novel sample identifying complexity of \eqref{eq:system} under the attack in \defref{def:attack}.
We reveal that the optimal controller can be designed as an $H_2$ optimal controller maximizing the sample identifying complexity.
Subsequently, the optimal security parameter is determined using the controllability Gramian of \eqref{eq:system} with the optimal controller.

\subsection{Optimal controller}

The sample identifying complexity of \eqref{eq:system} under the attack in \defref{def:attack} is obtained as follows.

\begin{lemma}\label{lem:sic}
    The function
    \begin{equation}
        \gamma(N, F) \coloneqq n [(N-1) \tr(\Psi)]^{-1}
        \label{eq:sic}
    \end{equation}
    is a sample identifying complexity of \eqref{eq:system} under the attack in \defref{def:attack}, where $\Psi = \Psi(F)$ is a solution to the discrete Lyapunov equation $A \Psi A^\top - \Psi + I = 0$.
\end{lemma}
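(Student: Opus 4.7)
The plan is to produce a closed-form expression for the error and then chain three ingredients: a least-squares variance identity, Jensen's inequality for the trace of the inverse, and the AM--HM inequality applied to the eigenvalues of $\Psi$.

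From \eqref{eq:estimate} and the substitution $X_f = A X_p + W_p$ permitted by \eqref{eq:system}, $\hat{A} - A = W_p X_p^\top (X_p X_p^\top)^{-1}$, so $\|A - \hat{A}\|_F^2 = \tr\bigl((X_p X_p^\top)^{-1} X_p W_p^\top W_p X_p^\top (X_p X_p^\top)^{-1}\bigr)$. Taking expectations, the fact that each $w_t$ is independent of $\mathcal{F}_t = \sigma(x_{t_s},\ldots,x_t)$ and satisfies $\EV[w_t w_t^\top] = \sigma^2 I$, together with a careful cancellation of cross-terms in the trace, should yield the least-squares variance identity $\EV[\|A - \hat{A}\|_F^2] = n\sigma^2\, \EV[\tr((X_p X_p^\top)^{-1})]$.

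Since $M \mapsto \tr(M^{-1})$ is convex on the positive-definite cone, Jensen's inequality gives $\EV[\tr((X_p X_p^\top)^{-1})] \geq \tr\bigl(\EV[X_p X_p^\top]^{-1}\bigr)$. Iterating \eqref{eq:system} shows $\EV[x_t x_t^\top] = \sigma^2 \sum_{k=0}^{t} A^k (A^\top)^k \preceq \sigma^2 \Psi$ for every $t$, with the inequality coming from the series $\Psi = \sum_{k=0}^\infty A^k (A^\top)^k$ that the Lyapunov equation encodes when $A$ is stable. Summing over the $N-1$ columns of $X_p$ gives $\EV[X_p X_p^\top] \preceq (N-1)\sigma^2 \Psi$, and monotonicity of the trace on the PSD cone then yields $\tr\bigl(\EV[X_p X_p^\top]^{-1}\bigr) \geq \tr(\Psi^{-1})/[(N-1)\sigma^2]$.

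Finally, Cauchy--Schwarz applied to the eigenvalues of $\Psi$ (equivalently AM--HM) gives $\tr(\Psi)\tr(\Psi^{-1}) \geq n^2$, so $\tr(\Psi^{-1}) \geq n^2/\tr(\Psi)$. Chaining the three bounds, $\EV[\|A - \hat{A}\|_F^2] \geq n^3/[(N-1)\tr(\Psi)]$; dividing by $n^2$ recovers $\EV[\epsilon(N,F)] \geq n/[(N-1)\tr(\Psi)] = \gamma(N,F)$. I expect the delicate step to be the least-squares variance identity of the second paragraph: because the autoregressive dynamics make $W_p$ and $X_p$ dependent, the standard conditional-on-$X_p$ derivation does not apply verbatim, and the argument must instead exploit the martingale-difference structure of $\sum_{t} w_t x_t^\top$ with respect to $\mathcal{F}_t$ to drive the cross-terms to zero in expectation.
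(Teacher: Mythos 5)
Your proof is correct and lands on the paper's bound, but the second half takes a genuinely different route from the paper's. Both arguments begin the same way: $\hat{A} - A = W_p X_p^+$, the reduction of $\EV[\|A-\hat{A}\|_F^2]$ to $n\sigma^2$ times the expected trace of $(X_p X_p^\top)^{-1}$, and the Gramian bound $\EV[x_t x_t^\top] = \sigma^2 \sum_{k=0}^{t} A^k (A^k)^\top \preceq \sigma^2 \Psi$. From there the paper works pathwise: it applies the AM--HM inequality to the eigenvalues of the \emph{random} matrix $X_p X_p^\top$ to get $\tr((X_p X_p^\top)^{-1}) \ge n^2/\tr(X_p X_p^\top)$, then scalar Jensen ($\EV[1/X] \ge 1/\EV[X]$), and finally $\EV[\tr(X_p X_p^\top)] \le \sigma^2 (N-1)\tr(\Psi)$. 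You instead use matrix-level Jensen (convexity of $M \mapsto \tr(M^{-1})$ on the positive-definite cone), the Loewner bound $\EV[X_p X_p^\top] \preceq (N-1)\sigma^2 \Psi$ combined with operator antitonicity of the inverse (your phrase ``monotonicity of the trace on the PSD cone'' should really be stated as: $0 \prec M \preceq C$ implies $M^{-1} \succeq C^{-1}$, then take traces), and only then AM--HM, applied to the deterministic $\Psi$ rather than to $X_p X_p^\top$. The two chains give the same final $\gamma(N,F)$, but yours passes through the intermediate bound $\EV[\epsilon(N,F)] \ge \tr(\Psi^{-1})/[n(N-1)]$, which is at least as tight as $n/[(N-1)\tr(\Psi)]$ and dispatches all randomness in a single Jensen step, which is arguably cleaner. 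Regarding the step you flag as delicate---the identity $\EV[\|A-\hat{A}\|_F^2] = n\sigma^2\,\EV[\tr((X_p X_p^\top)^{-1})]$ despite $W_p$ and $X_p$ being dependent---your caution is warranted, but you are not behind the paper here: its own proof discards the cross terms by citing independence of $w_j$ and $w_k$ and then factorizes $\EV[\bar{X}\,\diag(\cdot)]$ into a product of expectations, which equally ignores the dependence of $X_p^+$ on the noise; making this step fully rigorous would require more than the martingale-difference observation you sketch, for your version and the paper's alike.
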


\begin{proof}
    It follows from \eqref{eq:estimate} and \eqref{eq:error} that $\EV[\epsilon(N, F)] = (1 / n^2) \EV[ \| W_p X_p^+ \|_F^2 ] = (1 / n^2) \EV[ \tr( X_p^+ (X_p^+)^\top W_p^\top W_p ) ]$.
    Let $\bar{X} = X_p^+ (X_p^+)^\top$, and let $\bar{W} = W_p^\top W_p$.
    Then, we obtain
    \begin{align*}
                    \EV\!\left[ \tr\!\left( \bar{X} \bar{W} \right) \!\right]
        &\!=\!      \EV\!\left[ \left( \bar{X}_{11} \bar{W}_{11} \!+\! \cdots \!+\! \bar{X}_{1,T} \bar{W}_{T,1} \right) \right. \\
        &\quad\quad \left. \!+\! \cdots \!+\! \left( \bar{X}_{T,1} \bar{W}_{1,T} \!+\! \cdots \!+\! \bar{X}_{T,T} \bar{W}_{T,T} \right) \right]\!, \\
        &\!=\!      \EV\!\left[ \sum_{j=1}^{t_f-t_s} \sum_{k=1}^{t_f-t_s} \bar{X}_{jk} w_{t_s-1+k}^\top w_{t_s-1+j} \right]\!, \\
        &\!=\!      \EV\!\left[ \sum_{k=1}^{t_f-t_s} \bar{X}_{kk} w_{t_s-1+k}^\top w_{t_s-1+k} \right]\!, \\
        &\!=\!      \EV\!\left[ \tr\!\left( \bar{X} \diag\!\left(\! w_{t_s}^\top w_{t_s}, \dots, w_{t_f-1}^\top w_{t_f-1} \!\right) \!\right) \!\right]\!, \\
        &\!=\!      \tr\!\left( \!\EV\!\left[ \bar{X} \right] \! \EV\!\left[ \diag\!\left(\! w_{t_s}^{\!\top} w_{t_s}, \dots, w_{t_{\!f}-1}^{\!\top} w_{t_{\!f}-1} \!\right) \!\right] \!\right)\!, \\
        &\!=\!      n \sigma^2 \tr\!\left( \EV\!\left[ X_p^{\!\top} \! ( X_p X_p^{\!\top} )^{-1} (X_p^{\!\top} \! ( X_p X_p^{\!\top} )^{-1})^{\!\top} \right] \right)\!, \\
        &\!=\!      n \sigma^2 \tr\!\left( \EV\!\left[ (X_p X_p^\top)^{-1} \right] \right)\!,
    \end{align*}
    where $T \!=\! t_f \!-\! t_s$, $\bar{W}_{ij} \!=\! w_{t_s-1+i}^\top w_{t_s-1+j}$, and the third equality follows from that $w_{t_s-1+k}$ and $w_{t_s-1+j}$ are independent for $j \!\ne\! k$.
    From Jensen's inequality, $\tr\left( (X_p X_p^\top)^{-1} \right) = \sum_{i=1}^n \lambda_i\left( (X_p X_p^\top)^{-1} \right) = \sum_{i=1}^n \lambda_i(X_p X_p^\top)^{-1} = n \sum_{i=1}^n (1/n) \lambda_i(X_p X_p^\top)^{-1} \ge n \left( \sum_{i=1}^n (1/n) \lambda_i(X_p X_p^\top) \right)^{-1} = n^2 \left( \sum_{i=1}^n \lambda_i(X_p X_p^\top) \right)^{-1} = n^2 \tr\left( X_p X_p^\top \right)^{-1}$ and $\EV[X^{-1}] \ge \EV[X]^{-1}$ hold, where $\lambda_i(M)$ denotes the $i$th eigenvalue of $M \in \R^{n \times n}$, and $X$ is a random variable.
    Hence, the trace of the expectation of the inverse matrix is bounded from below by $\tr\left( \EV\left[ (X_p X_p^\top)^{-1} \right] \right) \ge n^2 \EV\left[ \tr\left( X_p X_p^\top \right)^{-1} \right] \ge n^2 \EV\left[ \tr\left( X_p X_p^\top \right) \right]^{-1}$.
    Furthermore, it follows from \eqref{eq:system} that
    \begin{align*}
        &       \EV\left[ \tr\left( X_p X_p^\top \right) \right]
        =       \EV\left[ \tr\left( \sum_{t=t_s}^{t_f-1} x_t x_t^\top \right) \right]\!, \\
        &=      \EV\!\!\left[ \sum_{t=t_s}^{t_f-1} \!\tr \!\left(\!\! A^t x_0 x_0^\top \!(A^t)\!^\top \!\!+\! \sum_{k=0}^{t-1} \! A^{t-1-k} w_k w_k^\top \!(A^{t-1-k})\!^\top \!\!\right)\!\! \right]\!, \\
        &=      \sigma^2 \left[ \sum_{t=t_s}^{t_f-1} \tr\left(A^t (A^t)^\top + \sum_{k=0}^{t-1} A^{t-1-k} (A^{t-1-k})^\top \right) \right]\!, \\
        &=      \sigma^2 \left[ \sum_{t=t_s}^{t_f-1} \tr\left( \sum_{k=0}^t A^k (A^k)^\top \right) \right]\!, \\
        &\le    \sigma^2 \left[ \sum_{t=t_s}^{t_f-1} \tr(\Psi) \right] = \sigma^2 (N-1) \tr(\Psi),
    \end{align*}
    where $A^t (A^t)^\top + \sum_{k=0}^{t-1} A^{t-1-k} (A^{t-1-k})^\top = A^t (A^t)^\top + \sum_{k=0}^{t-1} A^k (A^k)^\top = \sum_{k=0}^t A^k (A^k)^\top \le \sum_{k=0}^\infty A^k (A^k)^\top = \Psi$.
    Consequently, we obtain $\EV[\epsilon(N, F)] \ge (1 / n^2) \cdot n \sigma^2 \cdot n^2 \cdot [\sigma^2 (N-1) \tr(\Psi)]^{-1} = \gamma(N, F)$.
    By \defref{def:sic}, $\gamma(N, F)$ is a sample identifying complexity of \eqref{eq:system} under the attack in \defref{def:attack}.
\end{proof}

The sample identifying complexity \eqref{eq:sic} is computed from the sample size $N$ and the controllability Gramian $\Psi$ of \eqref{eq:system}, which is a function of the feedback gain of \eqref{eq:ctrl}.
By \defref{def:security}, for some feedback gain $F$, the encrypted control system \eqref{eq:ecs} is secure if $\tau(N', \lambda) > \tau_c$ holds for the minimum sample size $N'$ satisfying $\gamma(N', F) < \gamma_c$ because the sample identifying complexity $\gamma(N, F)$ is monotonically decreasing on $N$.
By \defref{def:sdt}, for some security parameter $\lambda$, the sample deciphering time $\tau(N', \lambda)$ increases as $N'$ increases.
Hence, increasing $N'$, the defense period $\tau_c$ can be extended while maintaining the security.
A feedback gain $F$ that maximizes $N'$ can be designed by maximizing $\gamma(N', F)$, i.e., minimizing the trace of $\Psi$.
The following theorem reveals that such a controller is the optimal $H_2$ controller when $\gamma$ is given as \eqref{eq:sic}.

\begin{theorem}\label{thm:opt_ctrl}
    The feedback gain of \eqref{eq:ctrl} maximizing \eqref{eq:sic} is
    \begin{equation}
        F^\ast = Q^\ast (P^\ast)^{-1},
        \label{eq:opt_ctrl}
    \end{equation}
    where $(\eta^\ast, P^\ast, Q^\ast) \in \R \times \R^{n \times n} \times \R^{m \times n}$ is a solution to the problem
    \[
        \min_{(\eta, P, Q)} \eta \ \text{s.t.}\ \tr(P) \,<\, \eta, \ P = P^\top > 0, \  
        \begin{bmatrix}
            P      \!&\! R \!&\! I \\
            R^\top \!&\! P \!&\! O \\
            I      \!&\! O \!&\! I
        \end{bmatrix} > 0,
    \]
    $R = A_p P + B_p Q$, $(A_p, B_p)$ are defined in \eqref{eq:plant}, and $I \in \R^{n \times n}$ and $O \in \R^{n \times n}$ are the identity and zero matrices, respectively.
\end{theorem}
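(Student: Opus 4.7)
The plan is to convert the problem of maximizing $\gamma(N,F)$ in \eqref{eq:sic} over stabilizing feedback gains $F$ into the semidefinite program displayed in the statement, via the standard linearizing change of variables for state-feedback synthesis.

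First, I would reduce the problem to minimizing $\tr(\Psi)$: since $n$ and $N-1$ are fixed positive constants, $\gamma(N,F)$ is strictly decreasing in $\tr(\Psi(F))$, so maximizing it is equivalent to minimizing $\tr(\Psi(F))$ subject to $A = A_p + B_p F$ being Schur stable and $\Psi$ solving $A\Psi A^\top - \Psi + I = 0$. Next, I would relax the Lyapunov equation to the strict Lyapunov inequality $A P A^\top - P + I < 0$ with $P = P^\top > 0$; any such $P$ is an upper bound of the true Gramian in the Loewner order, and the choice $P_\alpha = \alpha \Psi$ with $\alpha > 1$ gives $A P_\alpha A^\top - P_\alpha + I = (1-\alpha)I < 0$ with $\tr(P_\alpha) \to \tr(\Psi)$ as $\alpha \to 1^+$, so the infimum of $\tr(P)$ over the relaxed set coincides with $\tr(\Psi)$.

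The remaining work is algebraic. I would introduce $Q = FP$, so that $F = Q P^{-1}$ and $A P = A_p P + B_p Q =: R$, turning the inequality into $R P^{-1} R^\top - P + I < 0$, or equivalently $P - I - R P^{-1} R^\top > 0$. A first application of the Schur complement, using $P > 0$, yields $\begin{bmatrix} P - I & R \\ R^\top & P \end{bmatrix} > 0$; a second application against the identity block converts the $-I$ offset in the $(1,1)$ position into the augmented $3 \times 3$ LMI stated in the theorem. Adding the epigraphic scalar $\eta$ with $\tr(P) < \eta$ linearizes the objective, and the optimal feedback is recovered as $F^\ast = Q^\ast (P^\ast)^{-1}$.

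The main obstacle I anticipate is the relaxation step, in particular verifying that passing from the Lyapunov equation to the strict inequality does not shrink the attainable objective and that the recovered $F^\ast$ is genuinely stabilizing. Both follow from the monotonicity of discrete Lyapunov inequalities together with positive definiteness of $P^\ast$, which is inherited from the $(1,1)$ block of the LMI. The two Schur-complement manipulations are routine but must be tracked as genuine equivalences rather than one-sided implications, which requires keeping the constraint $P > 0$ visible throughout the derivation.
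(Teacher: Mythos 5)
Your proposal is correct, but it follows a different route from the paper at the key step. The paper's proof, after the same initial reduction (only $\tr(\Psi)$ depends on $F$, so maximizing $\gamma$ is minimizing $\tr(\Psi)$), introduces the fictitious system $G: z_{t+1} = A z_t + v_t,\ y_t = z_t$, observes that $\tr(\Psi) = \|G\|_{H_2}^2$ because $\Psi$ is the (output) controllability Gramian of $G$, and then simply cites a known discrete-time $H_2$ state-feedback synthesis result (\cite[Proposition~II.1]{steentjes2020}, the discrete analogue of \cite[Proposition~3.13]{Scherer2000}) to conclude $F^\ast = Q^\ast (P^\ast)^{-1}$. You instead prove that LMI characterization from scratch: relaxing the Lyapunov equation $A\Psi A^\top - \Psi + I = 0$ to the strict inequality $APA^\top - P + I < 0$, showing via $P_\alpha = \alpha\Psi$, $\alpha \to 1^+$, and the Stein-equation monotonicity $P > \Psi$ that the infimum of $\tr(P)$ equals $\tr(\Psi(F))$ for each stabilizing $F$, substituting $Q = FP$ so $R = AP$, and applying two Schur complements to reach the $3\times 3$ LMI with the epigraph variable $\eta$; your Schur-complement steps and the stability of any recovered $F = QP^{-1}$ (from $APA^\top - P < -I$) check out. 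What your route buys is self-containedness and an explicit account of the relaxation and recovery subtleties; what the paper's route buys is brevity and the conceptual framing of the secure controller as an optimal $H_2$ controller, which is central to its narrative. One caveat inherited from the theorem statement itself (not a flaw of yours relative to the paper): with strict inequalities the displayed program has an infimum rather than an attained minimum, so "a solution" should be read in the usual sense of $H_2$ LMI synthesis, exactly as the cited proposition treats it; your honest phrasing in terms of infima is consistent with this.
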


\begin{proof}
    The parameter in \eqref{eq:sic} depending on a feedback gain $F$ is only the Gramian $\Psi = \Psi(F)$.
    Hence, the feedback gain $F^\ast$ maximizing \eqref{eq:sic} satisfies $F^\ast = \argmin{}_F \tr(\Psi)$.
    Now we consider the fictitious system $G: z_{t+1} = A z_t + v_t,\ y_t = z_t$, where $z_t \in \R^n$, $v_t \in \R^n$, and $y_t \in \R^n$.
    Then, $\tr(\Psi) = \|G\|_{H_2}^2$ holds, where $\|\cdot\|_{H_2}$ is the $H_2$ norm, because $\Psi$ is the output controllability Gramian of $G$.
    Therefore, the feedback gain is designed as $F^\ast = \argmin{}_F \|G\|_{H_2} = Q^\ast (P^\ast)^{-1}$, where the second equality follows from~\cite[Proposition~II.1]{steentjes2020}, a discrete-time version of~\cite[Proposition~3.13]{Scherer2000}.
\end{proof}

The controller \eqref{eq:ctrl} with the feedback gain \eqref{eq:opt_ctrl} is the optimal $H_2$ controller for the fictitious system generated by \eqref{eq:system}.
The controller is also optimal for security in the sense of \defref{def:security} when choosing $\gamma$ as \eqref{eq:sic} because the controller maximizes the configurable range of an acceptable estimation error or defense period while maintaining security.
Meanwhile, the optimal controller can reduce the security parameter satisfying the security for some defense period.
This fact is used for designing a security parameter in the next section.

\subsection{Optimal security parameter}

A large security parameter not only improves the security level of encrypted control systems but also generally increases the computation costs of encryption, decryption, and homomorphic evaluation algorithms.
Hence, the minimum security parameter achieving the security is optimal in terms of the implementation costs of encrypted control systems.
Similarly to the optimal controller design, the optimal security parameter can be designed by maximizing the minimum sample size $N'$ satisfying $\gamma(N', F) < \gamma_c$ for some $F$ because, by \defref{def:sdt}, $\lambda$ decreases as $N$ increases for some $\tau(N, \lambda)$.
The maximization is achieved by the optimal controller \eqref{eq:opt_ctrl} as already discussed.
The following lemma shows the minimum sample size with the optimal controller.

\begin{lemma}\label{lem:opt_sample}
    Consider the attack in \defref{def:attack}.
    Given the controller \eqref{eq:ctrl} with the feedback gain \eqref{eq:opt_ctrl}.
    The minimum sample size $N^\ast$ satisfying $\gamma(N^\ast, F^\ast) < \gamma_c$ is
    \begin{equation}
        N^\ast = N(\gamma_c, \Psi^\ast) \coloneqq \left\lfloor n [\gamma_c \tr(\Psi^\ast)]^{-1} \right\rfloor + 2,
        \label{eq:opt_sample}
    \end{equation}
    where $\gamma_c$ is definined in \defref{def:security}, and $\Psi^\ast = \Psi(F^\ast)$ is the Gramian in \lemref{lem:sic} with the feedback gain $F^\ast$.
\end{lemma}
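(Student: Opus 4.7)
The plan is to simply invert the inequality $\gamma(N^\ast, F^\ast) < \gamma_c$ using the explicit form of the sample identifying complexity already established in \lemref{lem:sic}, and then pick out the smallest integer $N$ satisfying the resulting bound.

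First, I would observe that by \lemref{lem:sic}, with $F = F^\ast$ and $\Psi^\ast = \Psi(F^\ast)$, the security condition $\gamma(N, F^\ast) < \gamma_c$ is equivalent to
\[
    n\,[(N-1)\tr(\Psi^\ast)]^{-1} < \gamma_c.
\]
Since $\tr(\Psi^\ast) > 0$ (because $\Psi^\ast$ solves a discrete Lyapunov equation with $A = A_p + B_p F^\ast$ stable, hence $\Psi^\ast \succ 0$) and $\gamma_c > 0$, this can be rearranged into the strict lower bound
\[
    N > 1 + n\,[\gamma_c \tr(\Psi^\ast)]^{-1}.
\]

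Next, I would extract the smallest integer $N$ satisfying the strict inequality above. Setting $y \coloneqq n\,[\gamma_c \tr(\Psi^\ast)]^{-1}$, the smallest integer $N$ with $N > 1 + y$ is $\lfloor y \rfloor + 2$ in every case: if $y$ is not an integer then $\lfloor 1 + y \rfloor + 1 = \lfloor y \rfloor + 2$, and if $y$ is an integer then $1 + y + 1 = y + 2 = \lfloor y \rfloor + 2$. This yields exactly
\[
    N^\ast = \left\lfloor n\,[\gamma_c \tr(\Psi^\ast)]^{-1} \right\rfloor + 2,
\]
matching \eqref{eq:opt_sample}. Since $\gamma(\cdot, F^\ast)$ is monotonically decreasing in $N$, no smaller integer can satisfy the security condition, proving the claim.

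There is no real obstacle here; the argument is essentially bookkeeping on top of \lemref{lem:sic}. The only subtlety worth flagging in the write-up is the floor-plus-two convention (handling the borderline case when $y$ is an integer) and the implicit use of $\tr(\Psi^\ast) > 0$, which follows from the stability of the closed-loop matrix induced by $F^\ast$ and guarantees that the manipulations above are valid.
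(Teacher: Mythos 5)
Your proof is correct and follows essentially the same route as the paper: invert \eqref{eq:sic} to get $\gamma(N,F^\ast)<\gamma_c \iff N > n[\gamma_c\tr(\Psi^\ast)]^{-1}+1$ and take the smallest integer above that threshold. The only difference is that you spell out the floor-plus-two bookkeeping (including the integer borderline case and $\tr(\Psi^\ast)>0$), which the paper leaves implicit.
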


\begin{proof}
    It follows from \eqref{eq:sic} that $\gamma(N, F^\ast) < \gamma_c \!\!\!\iff\!\!\! N > n [\gamma_c \tr(\Psi^\ast)]^{-1} + 1$.
    Therefore, the minimum sample size $N^\ast$ satisfying $\gamma(N^\ast, F^\ast) < \gamma_c$ is given as \eqref{eq:opt_sample}.
\end{proof}

Using the minimum sample size, the optimal security parameter is determined as follows.

\begin{theorem}\label{thm:opt_sec}
    Consider the attack in \defref{def:attack}.
    Given the controller \eqref{eq:ctrl} with the feedback gain \eqref{eq:opt_ctrl}.
    The minimum security parameter $\lambda^\ast$ making the encrypted control system \eqref{eq:ecs} secure, in the sense of \defref{def:security}, is
    \begin{equation}
        \lambda^\ast = \lambda(\tau_c, \Upsilon, N^\ast) \coloneqq \left\lfloor \log_2 \Upsilon \tau_c (N^\ast)^{-1} \right\rfloor + 1,
        \label{eq:opt_sec}
    \end{equation}
    where $\Upsilon$, $\tau_c$, and $N^\ast$ are defined in \defref{def:sdt}, \defref{def:security}, and \lemref{lem:opt_sample}, respectively.
\end{theorem}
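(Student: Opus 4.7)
The plan is to unpack \defref{def:security} in the presence of the optimal controller $F^\ast$, reduce the universal quantifier over sample sizes $N$ to a single binding constraint at $N=N^\ast$, and then invert the definition \eqref{eq:sdt} of the sample deciphering time to solve for the smallest integer $\lambda$. By \defref{def:security}, the system \eqref{eq:ecs} is secure exactly when, for every $N\in\N$, at least one of $\gamma(N,F^\ast)\ge \gamma_c$ or $\tau(N,\lambda)>\tau_c$ holds.

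First I would use the monotonicity of the sample identifying complexity \eqref{eq:sic} in $N$, which is immediate from $\gamma(N,F^\ast)=n/[(N-1)\tr(\Psi^\ast)]$: the function is strictly decreasing in $N$. Together with \lemref{lem:opt_sample}, this gives $\gamma(N,F^\ast)\ge \gamma_c$ for every $N<N^\ast$ and $\gamma(N,F^\ast)<\gamma_c$ for every $N\ge N^\ast$. Consequently the sample sizes $N<N^\ast$ never threaten security, and the residual obligation is to guarantee $\tau(N,\lambda)>\tau_c$ for all $N\ge N^\ast$.

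Next I would invoke the monotonicity of $\tau(N,\lambda)=2^\lambda N\Upsilon^{-1}$ in $N$: it is strictly increasing, so the universal condition $\tau(N,\lambda)>\tau_c$ for all $N\ge N^\ast$ collapses to the single inequality $\tau(N^\ast,\lambda)>\tau_c$, that is, $2^\lambda N^\ast \Upsilon^{-1}>\tau_c$. Taking $\log_2$ yields the scalar condition $\lambda>\log_2(\Upsilon\tau_c(N^\ast)^{-1})$, and the smallest integer $\lambda$ satisfying this strict inequality is $\lfloor\log_2\Upsilon\tau_c(N^\ast)^{-1}\rfloor+1$, matching \eqref{eq:opt_sec}.

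The only subtle point, and the one I would spend care on, is that the inequality in $\lambda$ is strict: one must verify that the floor-plus-one formula gives the correct minimum integer in both the generic case where $\log_2(\Upsilon\tau_c/N^\ast)$ is irrational and the edge case where it happens to be an integer $k$, for which $\lfloor k\rfloor+1=k+1$ is indeed the least integer strictly greater than $k$. Apart from this book-keeping and a clean statement that no smaller $\lambda$ can work (because $\lambda^\ast-1$ would violate $\tau(N^\ast,\lambda^\ast-1)>\tau_c$, exposing the sample size $N=N^\ast$ as a witness of insecurity), the proof is a direct composition of \lemref{lem:opt_sample} with the definitions \eqref{eq:sdt} and \defref{def:security}.
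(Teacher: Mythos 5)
Your proof is correct and follows essentially the same route as the paper: the paper's proof simply inverts $\tau(N^\ast,\lambda)>\tau_c$ via \eqref{eq:sdt} to get $\lambda>\log_2 \Upsilon\tau_c(N^\ast)^{-1}$ and takes the least integer above that threshold, exactly as you do. The additional steps you include---reducing \defref{def:security} to the single binding constraint at $N=N^\ast$ using the monotonicity of $\gamma$ and $\tau$ in $N$ together with \lemref{lem:opt_sample}, and checking that $\lambda^\ast-1$ fails---are not in the paper's formal proof but appear as discussion in the surrounding text, so your write-up is a more self-contained version of the same argument.
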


\begin{proof}
    It follows from \eqref{eq:sdt} that $\tau(N^\ast, \lambda) > \tau_c \!\!\!\iff\!\!\! \lambda \!>\! \log_2\! \Upsilon \tau_c (N^\ast)^{-1}$.
    Therefore, the minimum security parameter $\lambda^\ast$ satisfying $\tau(N^\ast, \lambda^\ast) > \tau_c$ is given as \eqref{eq:opt_sec}.
\end{proof}

Consequently, the optimal encrypted control system with updatable homomorphic encryption under the attack in \defref{def:attack} can be systematically designed as follows:
1) Set the desired security level $(\gamma_c, \tau_c)$.
2) Suppose an attacker's computer performance $\Upsilon$.   
3) Compute the optimal controller $F^\ast$ of \eqref{eq:opt_ctrl}.
4) Compute the Gramian $\Psi^\ast = \Psi(F^\ast)$ in \lemref{lem:sic}.
5) Compute the sample size $N^\ast = N(\gamma_c, \Psi^\ast)$ of \eqref{eq:opt_sample}.
6) Compute the optimal security parameter $\lambda^\ast = \lambda(\tau_c, \Upsilon, N^\ast)$ of \eqref{eq:opt_sec}.
From \thmref{thm:opt_sec}, the designed encrypted control system is secure, in the sense of \defref{def:security}.
Note that the design procedure can be applied to other attacks by changing the sample identifying complexity \eqref{eq:sic} and controller \eqref{eq:opt_ctrl}.
Moreover, the optimal design of encrypted control systems with typical homomorphic encryption can be achieved by computing only the optimal security parameter $\lambda^\ast_0 = \lambda(\tau_c, \Upsilon, 1)$.

\section{Numerical Example}
\label{sec:example}

Given the parameters of \eqref{eq:plant} as
\[
    A_p =
    \begin{bmatrix}
        0.2 &  0.6 &    0 &   0 \\
        0.5 & -0.5 & -0.1 & 0.2 \\
          0 &    0 &  0.5 &   0 \\
          0 &    0 &    0 & 0.3
    \end{bmatrix},\ 
    B_p =
    \begin{bmatrix}
          0 &   1 \\
          0 &   0 \\
        0.5 & 0.5 \\
          1 &   0
    \end{bmatrix}.
\]
The optimal feedback gain \eqref{eq:opt_ctrl} is given as
\[
    F^\ast =
    \begin{bmatrix}
         0.06 &  0.08 & -0.17 & -0.24 \\
        -0.06 & -0.63 & -0.15 &  0.08
    \end{bmatrix},
\]
where CVXPY~\cite{Diamond16} is used for solving the optimization problem in \thmref{thm:opt_ctrl}.
\figref{fig:error} depicts the estimation error \eqref{eq:error} (gray dots), its expectation (blue solid line), and the sample identifying complexity \eqref{eq:sic} (orange dashed line) for $N = 500, \, \dots, \, 5000$, where the attack of \defref{def:attack} is performed $50$ times for each sample size with $\sigma^2 = 0.01$.
The result shows that \eqref{eq:sic} is an appropriate choice of a sample identifying complexity.
Moreover, with $\gamma_c = 10^{-6}$, $\tau_c = 31536 \times 10^4$~s ($10$ years), and $\Upsilon = 4.42 \times 10^{17}$~FLOPS\footnote{Supercomputer Fugaku. https://www.top500.org/system/179807/}, the minimum sample size \eqref{eq:opt_sample} and security parameter \eqref{eq:opt_sec} are obtained as $N^\ast = 785569$ and $\lambda^\ast = 68$~bit, respectively.
Note that the minimum security parameter making the encrypted control system in \remref{rem:ecs} secure is $\lambda^\ast_0 = \lambda(\tau_c, \Upsilon, 1) = 87$~bit.
When using the updatable homomorphic encryption in~\cite{teranishi2023a} and the ElGamal encryption~\cite{ElGamal85}, the minimum key lengths \eqref{eq:opt_key} achieving $\lambda^\ast$ and $\lambda^\ast_0$ bit security are respectively given as $k^\ast = 589$ and $1031$~bit, where the time complexity of the fastest known algorithm for breaking the encryption schemes is $\Omega(k) = \exp\{(64/9)^{1/3} (\ln 2^k)^{1/3} (\ln\ln 2^k)^{2/3}\}$~\cite{Bernstein93}.

\begin{figure}[t]
    \centering
    \includegraphics[scale=1]{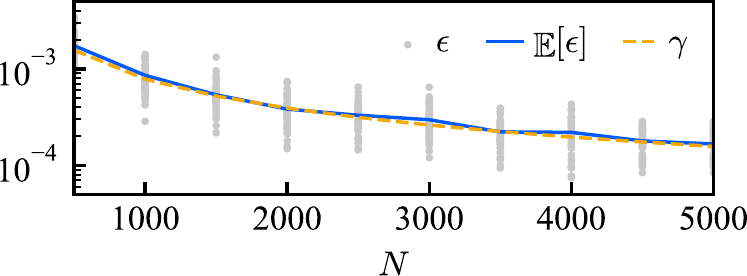}
    \caption{Estimation error and sample identifying complexity.}
    \label{fig:error}
\end{figure}

\section{Conclusions}
\label{sec:conclusions}

This study proposed an optimal controller and security parameter for encrypted control systems under the least squares identification, disclosing the parameters of a closed-loop system.
We revealed that the optimal controller is an $H_2$ optimal controller, and the optimal security parameter was computed for an encrypted control system with the controller and updatable homomorphic encryption.

Updatable homomorphic encryption plays a crucial role in the proposed design method.
In future work, we will construct updatable (leveled) fully homomorphic encryption.
Furthermore, the proposed design method can be applied to other attacks, such as subspace identification.

\bibliographystyle{IEEEtran}
\bibliography{encrypted_control_and_optimization,others}

\begin{thebibliography}{10}
\providecommand{\url}[1]{#1}
\csname url@rmstyle\endcsname
\providecommand{\newblock}{\relax}
\providecommand{\bibinfo}[2]{#2}
\providecommand\BIBentrySTDinterwordspacing{\spaceskip=0pt\relax}
\providecommand\BIBentryALTinterwordstretchfactor{4}
\providecommand\BIBentryALTinterwordspacing{\spaceskip=\fontdimen2\font plus
\BIBentryALTinterwordstretchfactor\fontdimen3\font minus
  \fontdimen4\font\relax}
\providecommand\BIBforeignlanguage[2]{{%
\expandafter\ifx\csname l@#1\endcsname\relax
\typeout{** WARNING: IEEEtran.bst: No hyphenation pattern has been}%
\typeout{** loaded for the language `#1'. Using the pattern for}%
\typeout{** the default language instead.}%
\else
\language=\csname l@#1\endcsname
\fi
#2}}

\bibitem{darup2021}
M.~S. Darup, A.~B. Alexandru, D.~E. Quevedo, and G.~J. Pappas, ``Encrypted
  control for networked systems – {An} illustrative introduction and current
  challenges,'' \emph{IEEE Control Syst. Mag.}, vol.~41, no.~3, pp. 58--78,
  2021.

\bibitem{kogiso2015}
K.~Kogiso and T.~Fujita, ``Cyber-security enhancement of networked control
  systems using homomorphic encryption,'' in \emph{IEEE Conf. Decis. Control},
  2015, pp. 6836--6843.

\bibitem{farokhi2017}
F.~Farokhi, I.~Shames, and N.~Batterham, ``Secure and private control using
  semi-homomorphic encryption,'' \emph{Control Eng. Pract.}, vol.~67, pp.
  13--20, 2017.

\bibitem{kim2023}
J.~Kim, H.~Shim, and K.~Han, ``Dynamic controller that operates over
  homomorphically encrypted data for infinite time horizon,'' \emph{IEEE Trans.
  Autom. Control}, vol.~68, no.~2, pp. 660--672, 2023.

\bibitem{darup2018a}
M.~S. Darup, A.~Redder, I.~Shames, F.~Farokhi, and D.~E. Quevedo, ``Towards
  encrypted {MPC} for linear constrained systems,'' \emph{IEEE Control Syst.
  Lett.}, vol.~2, no.~2, pp. 195--200, 2018.

\bibitem{alexandru2020b}
A.~B. Alexandru, A.~Tsiamis, and G.~J. Pappas, ``Towards private data-driven
  control,'' in \emph{IEEE Conf. Decis. Control}, 2020, pp. 5449--5456.

\bibitem{suh2021}
J.~Suh and T.~Tanaka, ``Encrypted value iteration and temporal difference
  learning over leveled homomorphic encryption,'' in \emph{Am. Control Conf.},
  2021, pp. 2555--2561.

\bibitem{cheon2018a}
J.~H. Cheon, K.~Han, S.~M. Hong, H.~J. Kim, J.~Kim, S.~Kim, H.~Seo, H.~Shim,
  and Y.~Song, ``Toward a secure drone system: {Flying} with real-time
  homomorphic authenticated encryption,'' \emph{IEEE Access}, vol.~6, pp.
  24\,325--24\,339, 2018.

\bibitem{teranishi2020}
K.~Teranishi, N.~Shimada, and K.~Kogiso, ``Development and examination of fog
  computing-based encrypted control system,'' \emph{IEEE Rob. Autom. Lett.},
  vol.~5, no.~3, pp. 4642--4648, 2020.

\bibitem{shono2022}
N.~Shono, T.~Miyazaki, K.~Teranishi, T.~Kanno, T.~Kawase, K.~Kogiso, and
  K.~Kawashima, ``Implementation of encrypted control of pneumatic bilateral
  control system using wave variables,'' in \emph{AROB-ISBC-SWARM}, 2022, pp.
  1169--1174.

\bibitem{fauser2022}
M.~Fauser and P.~Zhang, ``Detection of cyber attacks in encrypted control
  systems,'' \emph{IEEE Control Syst. Lett.}, vol.~6, pp. 2365--2370, 2022.

\bibitem{naseri2022}
A.~M. Naseri, W.~Lucia, and A.~Youssef, ``Confidentiality attacks against
  encrypted control systems,'' \emph{Cyber-Phys. Syst.}, pp. 1--20, 2022.

\bibitem{alisic2023}
R.~Alisic, J.~Kim, and H.~Sandberg, ``Model-free undetectable attacks on linear
  systems using {LWE}-based encryption,'' \emph{IEEE Control Syst. Lett.},
  vol.~7, pp. 1249--1254, 2023.

\bibitem{teranishi2022a}
K.~Teranishi and K.~Kogiso, ``Towards provably secure encrypted control using
  homomorphic encryption,'' in \emph{IEEE Conf. Decis. Control}, 2022, pp.
  7740--7745.

\bibitem{teranishi2023a}
K.~Teranishi, T.~Sadamoto, A.~Chakrabortty, and K.~Kogiso, ``Designing optimal
  key lengths and control laws for encrypted control systems based on sample
  identifying complexity and deciphering time,'' \emph{IEEE Trans. Autom.
  Control}, vol.~68, no.~4, pp. 2183--2198, 2023.

\bibitem{kim2020a}
J.~Kim, H.~Shim, and K.~Han, ``Design procedure for dynamic controllers based
  on {LWE}-based homomorphic encryption to operate for infinite time horizon,''
  in \emph{IEEE Conf. Decis. Control}, 2020, pp. 5463--5468.

\bibitem{Chong2019}
M.~S. Chong, H.~Sandberg, and A.~M.~H. Teixeira, ``A tutorial introduction to
  security and privacy for cyber-physical systems,'' in \emph{Eur. Control
  Conf.}, 2019, pp. 968--978.

\bibitem{Acar18}
A.~Acar, H.~Aksu, A.~S. Uluagac, and M.~Conti, ``A survey on homomorphic
  encryption schemes: Theory and implementation,'' \emph{ACM Comput. Surv.},
  vol.~51, no.~4, 2018.

\bibitem{boneh2013}
D.~Boneh, K.~Lewi, H.~Montgomery, and A.~Raghunathan, ``Key homomorphic {PRFs}
  and their applications,'' in \emph{Advances in cryptology – {CRYPTO} 2013},
  R.~Canetti and J.~A. Garay, Eds.\hskip 1em plus 0.5em minus 0.4em\relax
  Berlin, Heidelberg: Springer Berlin Heidelberg, 2013, pp. 410--428.

\bibitem{lehmann2018}
A.~Lehmann and B.~Tackmann, ``Updatable encryption with post-compromise
  security,'' in \emph{Advances in cryptology – {EUROCRYPT} 2018}, J.~B.
  Nielsen and V.~Rijmen, Eds.\hskip 1em plus 0.5em minus 0.4em\relax Cham:
  Springer International Publishing, 2018, pp. 685--716.

\bibitem{Katz21}
J.~Katz and Y.~Lindell, \emph{Introduction to Modern Cryptography}, D.~R.
  Stinson, Ed.\hskip 1em plus 0.5em minus 0.4em\relax Boca Raton: CRC Press,
  2021.

\bibitem{ElGamal85}
T.~Elgamal, ``A public key cryptosystem and a signature scheme based on
  discrete logarithms,'' \emph{IEEE Trans. Inf. Theory}, vol.~31, no.~4, pp.
  469--472, 1985.

\bibitem{ckks2017}
J.~H. Cheon, A.~Kim, M.~Kim, and Y.~Song, ``Homomorphic encryption for
  arithmetic of approximate numbers,'' in \emph{Advances in cryptology –
  {ASIACRYPT} 2017}, T.~Takagi and T.~Peyrin, Eds.\hskip 1em plus 0.5em minus
  0.4em\relax Cham: Springer International Publishing, 2017, pp. 409--437.

\bibitem{Paillier99}
P.~Paillier, ``Public-key cryptosystems based on composite degree residuosity
  classes,'' in \emph{Advances in cryptology — {EUROCRYPT} '99}, J.~Stern,
  Ed.\hskip 1em plus 0.5em minus 0.4em\relax Berlin, Heidelberg: Springer
  Berlin Heidelberg, 1999, pp. 223--238.

\bibitem{regev2005}
O.~Regev, ``On lattices, learning with errors, random linear codes, and
  cryptography,'' in \emph{Annu. ACM Symp. Theory Comput.}, 2005, p. 84–93.

\bibitem{steentjes2020}
T.~R.~V. Steentjes, M.~Lazar, and P.~M.~J. Van~den Hof, ``Scalable distributed
  and decentralized $\mathscr{{H}}_2$ controller synthesis for interconnected
  linear discrete-time systems,'' arXiv, Jan. 2020.

\bibitem{Scherer2000}
C.~Scherer and S.~Weiland, ``Linear matrix inequalities in control,'' Lecture
  Notes, Dutch Institute for Systems and Control, Delft, The Netherlands, 2000.

\bibitem{Diamond16}
S.~Diamond and S.~Boyd, ``{CVXPY}: A {Python}-embedded modeling language for
  convex optimization,'' \emph{J. Mach. Learn. Res.}, vol.~17, no.~83, pp.
  1--5, 2016.

\bibitem{Bernstein93}
D.~J. Bernstein and A.~K. Lenstra, ``A general number field sieve
  implementation,'' in \emph{The development of the number field sieve}, A.~K.
  Lenstra and H.~W. Lenstra, Eds.\hskip 1em plus 0.5em minus 0.4em\relax
  Berlin, Heidelberg: Springer Berlin Heidelberg, 1993, pp. 103--126.

\end{thebibliography}

\end{document}